\newcommand{\rk}[1]{\textup{\mbox{\,rank}}\,(#1)}
\newtheorem{theorem}{Theorem}[section]
\newtheorem{proposition}[theorem]{Proposition}
\begin{document}
\title{A characterization of entanglement-assisted quantum low-density parity-check codes}
\author{Yuichiro~Fujiwara,~\IEEEmembership{Member,~IEEE,} and Vladimir~D.~Tonchev%
\thanks{This work is supported by JSPS Grants-in-Aid for Scientific Research 20$\cdot$5897 and JSPS Postdoctoral Fellowships for Research Abroad (YF),
and by NSA Grant H98230-10-1-0177 and H98230-12-0213 (VDT).
The material in this paper was presented in part
at the Second International Conference on Quantum Error Correction, Los Angeles, USA, December 2011.}%
\thanks{Y. Fujiwara is with the Division of Physics, Mathematics and Astronomy, California Institute of Technology, MC 253-37, Pasadena, CA 91125 USA
{(email: yuichiro.fujiwara@caltech.edu)}.}
\thanks{V. D. Tonchev is with the Department of Mathematical Sciences, Michigan Technological University, Houghton, MI 49931 USA
{(email: tonchev@mtu.edu)}.}
\thanks{Copyright \copyright\ 2012 IEEE. Personal use of this material is permitted.
However, permission to use this material for any other purposes must be obtained from the IEEE by sending a request to pubs-permissions@ieee.org.}}
\markboth{IEEE transactions on Information Theory,~Vol.~x, No.~xx,~month~year}
{Fujiwara and Tonchev: Characterization of entanglement-assisted quantum LDPC codes}

\maketitle

\begin{abstract}
As in classical coding theory, quantum analogues of low-density parity-check (LDPC) codes
have offered good error correction performance
and low decoding complexity by employing the Calderbank-Shor-Steane (CSS) construction.
However, special requirements in the quantum setting severely limit the structures such 
quantum codes can have.
While the entanglement-assisted stabilizer formalism overcomes this limitation
by exploiting maximally entangled states (ebits),
excessive reliance on ebits is a substantial obstacle to implementation.
This paper gives necessary and sufficient conditions for the existence of quantum LDPC codes
which are obtainable from pairs of identical LDPC codes and consume only one ebit,
and studies the spectrum of attainable code parameters.
\end{abstract}

\begin{keywords}
Entanglement-assisted quantum error correction, low-density parity-check code, stabilizer code,
Steiner $2$-design, pairwise balanced design.
\end{keywords}

\IEEEpeerreviewmaketitle

\section{Introduction}
\PARstart{T}{his} paper addresses the question of how much one pair of qubits in a maximally entangled state can be exploited
to import classical sparse graph codes by the entanglement-assisted stabilizer formalism
proposed by Brun, Devetak, and Hsieh \cite{BDH}.
We will show how quantum error-correcting codes with particular desirable properties
under this framework are equivalent to some fundamental objects from combinatorial design theory,
known as pairwise balanced designs \cite{BJL}.
While  earlier relevant results in the literature give sufficient conditions
for the existence of entanglement-assisted quantum codes based on classical sparse graph codes
(see \cite{HYH2,FCVBT}, and references therein),
results presented here give necessary and sufficient conditions
under the conventional standard assumptions,
and mathematically describe the structure of such quantum codes consuming only one ebit.

Low-density parity-check (LDPC) codes are among the best known classical codes
in terms of error correction performance and decoding complexity \cite{RUbook}.
Extensive efforts have been made to generalize this class of error-correcting schemes
in classical coding theory to the quantum setting.
Among others, codes obtained by applying the well-known Calderbank-Shor-Steane (CSS) construction \cite{CS, Steane}
to pairs of carefully chosen identical LDPC codes have shown remarkable error correcting performance in simulations
(see, for example, \cite{MMM,Aly,Djordjevic,HBD} for recently proposed combinatorial quantum LDPC codes).

However, the progress on the quantum analogues of LDPC codes has lagged behind their 
classical counterparts
because of special requirements imposed on the code structure in the quantum setting;
only a limited class of classical codes can be exploited in a direct manner.

The development of the entanglement-assisted stabilizer formalism is a recent breakthrough in this regard
\cite{BDH}, \cite{HDB}.
This framework allows us to import any binary or quaternary linear codes
to the quantum setting by exploiting maximally entangled states.
In other words, by taking advantage of ebits,
the code designer can turn good classical linear codes into quantum error-correcting codes
and expect similar good performance in the quantum setting.
In fact, Hsieh, Brun, and Devetak \cite{HBD} demonstrated this advantage
by constructing entanglement-assisted quantum LDPC codes
which have notable error correction performance.

However, the entanglement-assisted stabilizer formalism is not a silver bullet;
an adequate supply of ebits may not be available.
The number of ebits required to import a given pair of classical LDPC codes varies greatly from pair to pair.
In fact, as stated by Hsieh, Yen, and Hsu \cite{HYH2}, until recently it was conjectured that
the entanglement-assisted stabilizer formalism required an impractically large number of ebits
to make use of classical codes with good error correction performance.

Fortunately, Hsieh, Yen, and Hsu \cite{HYH2} and Fujiwara, Clark, Vandendriessche, De Boeck, and Tonchev
 \cite{FCVBT} independently 
disproved this conjecture by giving examples of codes which require only one ebit
while outperforming the previously known quantum LDPC codes in simulations.
Given the positive results on quantum error-correcting codes requiring a tiny amount of entanglement
and the fact that excessive reliance on ebits is a substantial obstacle to implementation,
it is of interest to investigate the characteristics of quantum codes consuming only one ebit.

In this paper, we investigate what kind of quantum LDPC code is obtainable 
if only one ebit is allowed to import pairs of identical classical LDPC codes.
We show the equivalence between such quantum LDPC codes
and special classes of combinatorial objects,
namely Steiner $2$-designs and pairwise balanced designs of index one.
(For a thorough introduction to combinatorial design theory, we refer the interested reader to \cite{BJL,TRIPLESYSTEMS}.)
This equivalence provides theoretical insight into the properties and attainable code parameters,
and explains why all known high performance quantum LDPC codes requiring only one ebit
were derived from combinatorial designs of this kind.

It should be noted that it is also possible to utilize a pair of nonidentical classical error-correcting codes to construct a quantum LDPC code
as long as they are of the same length.
In fact, very recently quantum LDPC codes with good error correction performance have been found
through a clever use of nonidentical ingredients \cite{KHIS,HKIS}.
However, their methods require a large number of physical qubits to encode,
which is at odds with the focus of the current paper, that is, shedding light on more easily implementable quantum LDPC codes.
For this reason, we leave the equally interesting case of nonidentical ingredients consuming only a small amount of entanglement to future work.

In the following sections, we will show how the requirement of consuming only one ebit
dictates the structure of the exploitable pairs of identical classical LDPC codes.
In Section \ref{sc:EAQLDPC}, we briefly review the entanglement-assisted quantum LDPC codes
and related facts from combinatorial design theory and then prove the
equivalence between pairwise balanced designs of index one and entanglement-assisted
quantum LDPC codes consuming one ebit.
Section \ref{sc:Rates} provides bounds for the code parameters.
Section \ref{sc:Conclusion} discusses some open questions and directions for future work.

\section{Entanglement-assisted quantum \textup{LDPC} codes}\label{sc:EAQLDPC}
In this section we study the existence of entanglement-assisted quantum LDPC codes consuming one ebit
and its relation to combinatorial designs.
For a concise introduction to the entanglement-assisted stabilizer formalism,
we refer the reader to Hsieh, Yen, and Hsu \cite{HYH2}.

An $[[n,k;c]]$ \textit{entanglement-assisted quantum error-correcting code} 
(EAQECC) encodes $k$ logical qubits
into $n$ physical qubits with the help of $c$ copies of maximally entangled states.
The parameters $n$ and $k$ are the \textit{length} and \textit{dimension} of the 
EAQECC respectively.
An  $[[n,k;c]]$  EAQECC requires $c$ \textit{ebits}.

A classical LDPC code is a binary linear code which admits a parity-check matrix
with a small number of nonzero entries.
The \textit{quantum check matrix} of an EAQECC of length $n$
constructed by the CSS construction with a pair of LDPC codes has the form
\[\left[\begin{array}{cc}
H_1 & 0 \\
0 & H_2
\end{array}\right],\]
where $H_1$ and $H_2$ are the parity-check matrices of classical LDPC codes of length $n$,
and $0$ represents a zero matrix.
The EAQECC is called \textit{homogeneous} if $H_1$ is obtained by permuting rows of $H_2$.
In this case, because permuting rows does not affect the code parameters,
without loss of generality we assume $H_1 = H_2$ and omit the subscripts.
As far as the authors are aware, at the time of writing,
all entanglement-assisted quantum LDPC codes proposed in the literature
for the depolarizing channel are homogeneous.

The required amount of entanglement can be calculated
from the rank of $H_1H_2^T$ over the field of order $2$, that is, its $2$-rank
(see Wilde and Brun \cite{WB} for the proof and Wilde \cite{Wilde} for an alternative, equivalent formula).
Since we do not use ranks over another field,
we always assume that ranks are computed over $\mathbb{F}_2$.
If the parity-check matrix $H$ defines an $[n,k,d]$ linear code $C$,
then the resulting homogeneous code requires $c = \rk{HH^T}$ ebits
and is of length $n$ and dimension $2k - n + c$ \cite{HDB}.
The case when $c = 0$ gives the well-known \textit{stabilizer code}
of \textit{minimum distance} $d'$, where $d' \geq d$ is the minimum Hamming weight of a codeword in $C \setminus C^{\perp}$
(see \cite{MikeandIke} for a more detailed treatment of this special case).
Similarly, the minimum distance for the case $c >  0$ is
the minimum Hamming weight of a codeword in $C \setminus \mathcal{R}(C^{\perp})$,
where $\mathcal{R}(C^{\perp})$ is the normal subgroup $C \cap C^{\perp}$.
A particularly useful fact to quantum LDPC codes is that
regardless of the value $c$ the standard syndrome decoding can correct up to $\lfloor\frac{d-1}{2}\rfloor$ phase flips
and up to $\lfloor\frac{d-1}{2}\rfloor$ bit flips through two separate steps,
where each decoding step utilizes $H$ to compute the error syndrome for one of the two kinds of error \cite{HBD}.
In other words, we can take advantage of the ``classical minimum distance'' of ingredients in a straightforward manner
as we would in the classical setting during decoding.
In what follows, when the distance of a quantum error-correcting code is discussed,
we generally focus on this type of straightforwardly exploitable classical minimum distance of a quantum LDPC code.

The \textit{Tanner graph} of an $m \times n$ parity-check matrix $H$
is the bipartite graph consisting of $n$ bit vertices and $m$ parity-check vertices,
where an edge joins a bit vertex to a parity-check vertex
if that bit is included in the corresponding parity-check equation.
A \textit{cycle} in a graph is a sequence of connected vertices
which starts and ends at the same vertex in the graph and contains no other vertices more than once.
The \textit{girth} of a parity-check matrix is the length of a shortest cycle in the corresponding 
Tanner graph.
Since the Tanner graph is bipartite, its girth is even.
Clearly, a $4$-cycle in a parity-check matrix is a $2 \times 2$ all-one submatrix.
A $6$-cycle is a $3 \times 3$ submatrix in which each row and column has exactly two ones.
Typically, $4$-cycles severely reduce error correction performance while $6$-cycles have a 
mild negative effect.
Since we are interested in codes with excellent performance, we only consider codes with girth at least six.
To avoid triviality, we also assume that the row and column weights of a parity-check matrix are at least two.
An LDPC code is \textit{regular} if its parity-check matrix has constant row and column weights, and \textit{irregular} otherwise.

We begin with a simple observation about the the structure of a classical LDPC code without short cycles
which form a homogenous quantum LDPC code requiring only one ebit.

\begin{theorem}\label{observation}
There exists a homogeneous quantum \textup{LDPC} code which requires only one ebit and has girth greater than four if and only if
the following conditions on the parity-check matrix of the corresponding classical \textup{LDPC} code hold:
\begin{enumerate}
\item For each pair of distinct parity-checks $r_i$, $r_j$ there exists exactly one bit involved in both $r_i$ and $r_j$,
\item The size of each parity-check is odd and greater than one,
\item Each bit is involved in more than one parity-check.
\end{enumerate}
\end{theorem}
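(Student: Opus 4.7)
The plan is to translate the two hypotheses ``one ebit'' and ``girth greater than four'' into linear-algebraic statements about $HH^T$ and then read off the combinatorial conditions (1)--(3) from them. By the ebit formula recalled above, exactly one ebit is required precisely when $\rk{HH^T}=1$, while the entry $(HH^T)_{ij}$ equals the number of bits common to parity-checks $r_i$ and $r_j$, taken modulo $2$. The girth-four constraint forces every such intersection for $i\neq j$ to have size at most $1$, so the off-diagonal entries of $HH^T$ over $\mathbb{F}_2$ coincide with the actual intersection sizes rather than just their parities.

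I would next invoke the fact that a nonzero symmetric rank-$1$ matrix over $\mathbb{F}_2$ is necessarily of the form $uu^T$: if $M=uv^T$ is symmetric, then the column spaces of $M$ and $M^T$, spanned respectively by $u$ and $v$, must coincide, forcing $v$ to be a scalar multiple of $u$, and the only admissible nonzero scalar over $\mathbb{F}_2$ is $1$. For the forward direction, I would therefore write $HH^T=uu^T$ and argue that $u$ is the all-ones vector. Suppose $u_i=0$; then $(HH^T)_{ij}=0$ for every $j$, so by the girth hypothesis $r_i$ shares no bit with any other parity-check, which makes every column supporting $r_i$ have weight one, contradicting the standing assumption and condition~(3). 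Hence $u_i=1$ for every $i$, $HH^T$ is the all-ones matrix, and reading off its off-diagonal entries gives condition~(1), while the diagonal entries force every parity-check to have odd weight, which together with the row-weight-at-least-two assumption gives condition~(2).

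For the converse, conditions (1) and (2) together make $HH^T$ the $m\times m$ all-ones matrix over $\mathbb{F}_2$, which has rank one, so exactly one ebit suffices; condition~(1) alone rules out any two parity-checks sharing two or more bits and hence yields girth greater than four, while condition~(3) simply restates the standing column-weight hypothesis needed for the matrix to define a genuine LDPC parity-check matrix. I expect the main obstacle to be the step identifying $u$ with the all-ones vector, since this is the only place where the column-weight hypothesis enters in an essential way, and it is also where one has to combine a modular identity with the girth bound in order to recover exact intersection sizes; the remainder of the argument reduces to routine bookkeeping about entries of $HH^T$.
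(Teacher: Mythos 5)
Your proof is correct and follows the same overall strategy as the paper: both directions hinge on showing that the two hypotheses force $HH^T$ to be the all-ones matrix, and your sufficiency argument is identical to the paper's. The necessity direction is packaged differently, though. The paper locates a nonzero entry of $HH^T$, builds the all-one submatrix on the index set $I$ of its nonzero rows by hand, and then derives a contradiction from a row outside that set by producing a second such row and exhibiting a $4$-cycle between them. You instead invoke the clean structural fact that a nonzero symmetric rank-one matrix over $\mathbb{F}_2$ factors as $uu^T$ (your justification of this fact is sound), and then rule out $u_i=0$ directly: orthogonality of $r_i$ to every row plus the girth bound forces every column supporting $r_i$ to have weight one, contradicting the standing column-weight assumption. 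Your route is slightly shorter and arguably cleaner --- it needs only one offending row rather than two, and it trades the paper's $4$-cycle contradiction for a column-weight contradiction --- but both arguments lean on exactly the same two standing hypotheses (girth greater than four and column weight at least two), so the difference is one of presentation rather than substance.
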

\begin{proof}
First we prove sufficiency.
Let $H$ be a parity-check matrix of a classical LDPC code
which satisfies the three conditions in the statement.
Since the sizes of parity-checks are odd, the entries on the diagonal of $HH^T$
representing the inner products of the same rows are ones.
Because for each pair of parity-checks there exists exactly one bit involved in the pair, the other entries of $HH^T$ are also ones.
Hence $HH^T$ is the all-one square matrix. Thus, we have $\rk{HH^T} = 1$.
Since no pair of rows of $H$ produces a $4$-cycle, the girth of the corresponding classical LDPC code is greater than four.

Next we prove necessity.
Let $H'$ be a parity-check matrix of a classical binary linear code which yields
a homogeneous quantum LDPC code of girth greater than four requiring only one ebit.
Let $H'H'^T = (h'_{i,j})$ and write the $i$th row and $j$th column of $H'H'^T$
as $\boldsymbol{r_i}$ and $\boldsymbol{c_j}$ respectively.
Since $H'H'^T$ is not a zero matrix, there exists a nonzero entry $h'_{a,b}$.
Let $I = \{i : h'_{i,b} = 1\}$.
Because $H'H'^T$ is symmetric with respect to the diagonal representing the inner products of the same rows corresponding to parity-checks,
we have $h'_{b,i} = 1$ for $i \in I$.
Since $\rk{H'H'^T} = 1$, the rows $\boldsymbol{r_i}$ and columns $\boldsymbol{c_i}$, $i \in I$,
induce the $|I| \times |I|$ all-one matrix in $H'H'^T$.
Since $H'$ has no $4$-cycles, the all-one matrix corresponds to a set $R$ of rows in $H'$,
where for any $\boldsymbol{r} \in R$ the weight of $\boldsymbol{r}$ is odd and
each pair of distinct rows in $R$
have exactly one position in which both entries are ones.
Hence, if $H' = R$, then $H'$ satisfies all the three conditions in the statement.
Suppose the contrary, that $H'$ has a row $\boldsymbol{r'} \not\in R$.
Then $\boldsymbol{r'}$ does not have a one in a position where $\boldsymbol{r} \in R$ does.
By assumption, every column of $H'$ has at least two ones, and hence there is another row $\boldsymbol{r''}$
which is not in $R$ and has a one in one of the positions in which $\boldsymbol{r'}$ has a one.
Since $\rk{H'H'^T} = 1$ and $R$ generates all-one submatrix of $H'H'^T$,
$\boldsymbol{r'}$ and $\boldsymbol{r''}$ are orthogonal.
Hence the pair of rows induce a $4$-cycle, a contradiction. This completes the proof.
\end{proof}

Combinatorial objects which are equivalent to the classical LDPC codes
satisfying the conditions in Theorem \ref{observation} have been studied since the late $19$th century in combinatorial design theory.
Let $K$ be a subset of positive integers.
A \textit{pairwise balanced design} of \textit{order} $v$ and \textit{index} one with \textit{block sizes} from $K$, denoted by PBD$(v,K,1)$,
is an ordered pair $(V, {\mathcal B})$, where $V$ is a finite set of $v$ elements
called \textit{points}, and ${\mathcal B}$ is a family of subsets
of $V$, called \textit{blocks}, that satisfies the following two conditions:
\begin{enumerate}
	\item[(i)] each unordered pair of distinct elements of $V$ is contained in exactly one block of ${\mathcal B}$, 
        \item[(ii)] for every $B \in {\mathcal B}$ the cardinality $|B| \in K$.
\end{enumerate}
When $K$ is a singleton $\{\mu\}$, the PBD is a \textit{Steiner} $2$-\textit{design} of \textit{order} $v$
and \textit{block size} $\mu$, denoted by $S(2,\mu,v)$.
A PBD of order $v$ is \textit{trivial} if it has no blocks or consists of only one block of size $v$.
When the cardinality of $V$ is positive, a trivial PBD with no blocks means that $V$ is a singleton.

Define $\alpha(K) = \gcd\{\mu-1 : \mu \in K\}$ and $\beta(K) = \gcd\{\mu(\mu-1) : \mu \in K\}$.
Necessary conditions for the existence of a PBD$(v,K,1)$ are $v-1 \equiv 0 \pmod{\alpha(K)}$
and $v(v-1) \equiv 0 \pmod{\beta(K)}$. By a constructive proof, these conditions were shown to be asymptotically sufficient:
\begin{theorem}[Wilson \cite{W2}]\label{th:existencePBD}
There exists a constant $v_K$ such that
for every $v > v_K$ satisfying $v-1 \equiv 0 \pmod{\alpha(K)}$ and $v(v-1) \equiv 0 \pmod{\beta(K)}$
there exists a \textup{PBD}$(v,K,1)$.
\end{theorem}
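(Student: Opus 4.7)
The plan is to follow Wilson's recursive framework from design theory, which centers on the notion of the PBD-closure $B(K) = \{v : \text{a PBD}(v,K,1) \text{ exists}\}$. First I would verify that the two divisibility conditions are necessary by standard double counting: fixing a point $p$, every other point lies with $p$ in a unique block, so the number of blocks through $p$ is $(v-1)/(\mu-1)$ for some $\mu \in K$ at each block, forcing $\alpha(K) \mid v-1$; similarly counting pairs globally gives $\beta(K) \mid v(v-1)$. The real work is sufficiency.

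The central tool is Wilson's Fundamental Construction. Take a group-divisible design (GDD) with group sizes and block sizes suitably chosen from $K$ and its derivatives, replace each point by $m$ new points (inflation), replace each block of size $k$ by a transversal design $\mathrm{TD}(k,m)$, and finally fill each inflated group with a smaller PBD whose block sizes lie in $K$. The output, provided all ingredients exist, is again a PBD with block sizes in $K$. This reduces the existence problem to (a) securing a sufficient supply of transversal designs and (b) securing enough small PBDs to serve as seeds and fillers.

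For (a), I would invoke MacNeish's construction together with the finite-field construction of mutually orthogonal Latin squares, which guarantees $\mathrm{TD}(k,m)$ for every prime power $m$ exceeding an explicit bound depending on $k$, and then use Chebyshev-type results on prime gaps to show that such $m$ are dense enough in every residue class. For (b), I would proceed by induction on $v$, using the fact that $B(K)$ is closed under the Fundamental Construction to bootstrap: once $B(K)$ contains one full period modulo $\mathrm{lcm}(\alpha(K),\beta(K))$ beyond a threshold, inflation plus filling propagates the membership to all larger $v$ meeting the congruences.

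The main obstacle, and the technically hardest part of Wilson's original argument, is establishing these base cases: one must exhibit PBDs for a sufficiently dense set of small admissible orders so that the recursion has ingredients to consume. This requires a delicate interplay between number-theoretic lemmas — expressing large admissible $v$ as $v = mu + w$ with $u$ a block size, $m$ a prime power giving a usable $\mathrm{TD}$, and $w$ an admissible order for a filler PBD — and ad hoc constructions (difference families, Bose-type constructions over abelian groups) for the residual small orders. Once a sufficient finite stock is secured, the Fundamental Construction yields all but finitely many exceptions, and the constant $v_K$ absorbs those exceptions.
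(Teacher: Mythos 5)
The paper does not prove this statement: it is quoted verbatim from Wilson's existence theory for pairwise balanced designs \cite{W2} and used as a black box, so there is no in-paper argument to compare yours against. Judged on its own terms, your outline reproduces the correct architecture of Wilson's published proof --- the PBD-closure $B(K)$, the Fundamental Construction (inflate a GDD, replace blocks by transversal designs, fill groups), the supply of $\mathrm{TD}(k,m)$ from finite fields plus MacNeish and prime-density arguments, and a finite stock of base designs that the recursion then propagates to all large admissible $v$.

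That said, what you have written is a roadmap, not a proof, and the parts you defer are precisely the parts that make Wilson's theorem hard. Two concrete gaps: first, the claim that ``once $B(K)$ contains one full period modulo $\mathrm{lcm}(\alpha(K),\beta(K))$ beyond a threshold, inflation plus filling propagates membership to all larger admissible $v$'' is the statement of Wilson's eventual-periodicity theorem for PBD-closed sets; it is not a consequence of the Fundamental Construction alone but requires its own delicate induction (closure under $v \mapsto mv$, $v \mapsto mv+1$, and the interpolation between these), and you assert it rather than prove it. Second, the base-case supply cannot be waved at with ``difference families, Bose-type constructions'': Wilson's actual argument needs his separate theorem that for every $k$ and every sufficiently large prime power $q \equiv 1 \pmod{k(k-1)}$ there is an $S(2,k,q)$ built from a difference family over $\mathbb{F}_q$, proved by a character-sum counting argument; without some such quantitative input the recursion has nothing to consume and the constant $v_K$ cannot be shown to exist. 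A minor point: your necessity argument misstates the local count --- the blocks through a point $p$ partition the other $v-1$ points into parts of sizes $\mu-1$ with $\mu \in K$ (not all of the same size), so $v-1$ is an integer combination of $\{\mu - 1\}$ and hence divisible by $\alpha(K)$; the conclusion is right but the phrasing ``$(v-1)/(\mu-1)$ for some $\mu$'' is not. Necessity is in any case not part of the statement being proved.
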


The \textit{replication number} $r_x$ of a point $x \in V$ of a PBD $(V, {\mathcal B})$
is the number of occurrences of $x$ in the blocks of ${\mathcal B}$.
A PBD is \textit{odd-replicate} if for every $x \in V$ the replication number $r_x$ is odd.
If $r_x = r_y$ for any two points $x$ and $y$,
we say that the PBD is \textit{equireplicate} (or \textit{regular}) and has replication number $r_x$.
While our result will show that regular PBDs give rise to LDPC codes that are right-regular in the language of coding theory,
to avoid any confusion, we use the term equireplicate for combinatorial designs.
Every $S(2,\mu,v)$ is equireplicate and has replication number $\frac{v-1}{\mu-1}$.
An \textit{incidence matrix} of a PBD $(V,{\mathcal B})$ is a binary $v \times b$ matrix
 $H = (h_{i,j})$ with 
rows indexed by points, columns indexed by blocks, and $h_{i,j}=1$ if the
 $i$th point is contained in the $j$th block, and $h_{i,j}=0$ otherwise.

\begin{theorem}\label{th:equivalencePBD}
There exists a homogeneous quantum \textup{LDPC} code which requires only one ebit and has girth greater than four if and only if
the corresponding parity-check matrix of the classical \textup{LDPC} code is an 
incidence matrix of a nontrivial odd-replicate \textup{PBD} with index one and smallest
block size greater than one.
\end{theorem}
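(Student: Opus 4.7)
The plan is to deduce this theorem directly from Theorem \ref{observation} by a transparent dictionary between parity-check matrix entries and design incidences. I would identify each row of the parity-check matrix $H$ with a point and each column with a block, declaring a point to lie in a block precisely when the corresponding entry of $H$ equals one. Under this identification $H$ is the incidence matrix (as defined in the paper) of a set system whose row weights equal the replication numbers of the corresponding points, whose column weights equal the sizes of the corresponding blocks, and in which two distinct rows share a one-entry in column $j$ if and only if the two corresponding points belong simultaneously to the block indexed by $j$.

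With this dictionary in place, each of the three conditions of Theorem \ref{observation} translates at once into a design-theoretic hypothesis. Condition $1$, that any two parity-checks share exactly one common bit, is precisely the index-one property that any two distinct points lie in a unique common block. Condition $2$, that each parity-check has odd size greater than one, says that every replication number is odd and at least three, which in particular makes the design odd-replicate. Condition $3$, that each bit lies in at least two parity-checks, says that every block contains at least two points, i.e.\ the smallest block size exceeds one. The reverse translations of these three statements are immediate from the same dictionary.

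The only delicate matching is the handling of nontriviality. A PBD with no blocks gives an $H$ with no columns, which admits no parity-checks at all and violates condition $2$; a PBD consisting of a single block of size $v$ assigns every point replication number one, again violating condition $2$. Going the other way, in a nontrivial PBD of index one whose smallest block size is greater than one, any point $x$ must lie in at least two blocks, for if $x$ were in a single block $B$ then index one applied to any $y \notin B$ would force $V = B$, contradicting nontriviality. Combining this with odd-replicateness upgrades the bound to $r_x \geq 3$, which is precisely the remaining content of condition $2$. Hence the three conditions of Theorem \ref{observation} hold if and only if $H$ is an incidence matrix of a nontrivial odd-replicate PBD of index one with smallest block size greater than one, and Theorem \ref{observation} then yields the claimed equivalence.

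I do not anticipate any serious obstacle: the argument is essentially a careful matching of combinatorial language. The only place where care is needed is the short PBD-theoretic observation that rules out the two trivial cases and recovers the lower bound $r_x \geq 3$ from the weaker sounding hypotheses listed in the theorem statement, and this is exactly the step I would single out as the main point to get right.
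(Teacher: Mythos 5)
Your proof is correct and follows essentially the same route as the paper's: both deduce the theorem from Theorem \ref{observation} by identifying rows with points and columns with blocks and translating the three conditions term by term (index one $\leftrightarrow$ condition 1, odd replication $\leftrightarrow$ condition 2, block size $>1$ $\leftrightarrow$ condition 3). You are in fact slightly more careful than the published proof, which merely asserts that each replication number is ``odd and not equal to one'' and that the converse is trivial, whereas you supply the short argument that nontriviality, index one, and smallest block size greater than one force $r_x \geq 2$, hence $r_x \geq 3$ by oddness.
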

\begin{proof}
Let $H$ be an incidence matrix of a nontrivial odd-replicate \textup{PBD} with index one and smallest block size greater than one.
It suffices to show that $H$ is equivalent to a parity-check matrix satisfying the conditions on the classical LDPC code in Theorem \ref{observation}.
Because every pair of points appear exactly once in a block, for every pair of rows there exists exactly one column where both rows have one.
The number of appearances of a point is the weight of the corresponding row in $H$, which is odd and not equal to one.
Because each block contains more than one point, the weight of each column is larger than one.
By indexing rows of $H$ by parity-checks and columns by bits, $H$ can be regarded as a parity-check matrix satisfying the conditions as required.
It is trivial that the converse also holds.
\end{proof}

Note that if we allow a column of weight one, without loss of generality,
the parity-check matrix $H$ of a classical LDPC code must be either an incidence matrix of an odd-replicate \textup{PBD} with index one or of the form
\[\left[\begin{array}{cc}
A & 0 \\
0 & B
\end{array}\right],\]
where $A$ is an incidence matrix of an odd-replicate \textup{PBD}
with index one and smallest block size greater than one,
and $B$ is of constant column weight one and satisfies $BB^T = 0$.
Hence, in the latter case, $H$ defines a classical code which is either of minimum distance two
or consists of codewords with zeros added to each codeword defined by the PBD.
Hence, we only consider the case when each row and column has at least two ones.

The necessary and sufficient condition given in Theorem \ref{th:equivalencePBD} allows us to prove
that homogeneous quantum LDPC codes requiring only one ebit must have girth less than or equal to six.
\begin{theorem}\label{th:nogirtheight}
There exists no homogeneous quantum \textup{LDPC} code with girth greater than six which requires only one ebit.
\end{theorem}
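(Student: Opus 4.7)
The plan is to use Theorem \ref{th:equivalencePBD} to convert the claim into a purely combinatorial statement: no nontrivial PBD with index one and smallest block size greater than one can have a Tanner graph of girth greater than six. Hence it suffices to exhibit a $6$-cycle in the incidence structure of any such design.

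I would begin with an arbitrary triple of distinct points $p_1,p_2,p_3$ and, using index one, consider the three uniquely determined blocks $B_{12},B_{13},B_{23}$ containing the respective pairs. A short observation yields a dichotomy: if any two of these blocks coincide, say $B_{12}=B_{13}$, then this common block already contains both $p_2$ and $p_3$, so by uniqueness $B_{23}$ equals it as well. Hence for every triple of points, either all three associated blocks coincide (the triple is ``collinear'') or the three blocks are pairwise distinct.

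In the pairwise-distinct case, the sequence $p_1,B_{12},p_2,B_{23},p_3,B_{13},p_1$ is a $6$-cycle in the Tanner graph, immediately giving the contradiction. So I would then show that if every triple is collinear, the PBD must be trivial: fixing two points $p,q$ and letting $B$ be the unique block through them, any third point $r$ must lie with $p,q$ in some common block, which by uniqueness is $B$, forcing $V\subseteq B$; a second block of size at least two would then share a pair with $B$, violating index one. This reduces to the trivial single-block PBD, which is excluded by hypothesis.

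The main obstacle I anticipate is not the combinatorics itself but making sure the notion of ``nontrivial'' interacts cleanly with the odd-replicate and smallest-block-size hypotheses, so that the PBD really does contain enough points and blocks to run the dichotomy and the triviality step is not vacuous. Once that bookkeeping is tightened, the $6$-cycle drops out directly from the pairwise-distinct case and the theorem follows.
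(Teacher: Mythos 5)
Your proposal is correct and follows essentially the same route as the paper: reduce via Theorem \ref{th:equivalencePBD} to an index-one PBD and then exhibit a triangle (three points, three pairwise distinct blocks) forming a $6$-cycle in the Tanner graph; the paper locates the triangle by starting from two blocks meeting at a point rather than from a non-collinear triple, but the configuration produced is the same. Your collinear-triple dichotomy and the observation that an all-collinear design is the trivial one-block PBD correctly handle the degenerate case, which the paper dispatches instead by invoking the condition that every point lies on at least two blocks.
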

\begin{proof}
Suppose the contrary, that there exists a parity-check matrix $H$ of a binary linear code
which yields a homogeneous quantum LDPC code with girth greater than six requiring only one ebit.
By Theorem \ref{th:equivalencePBD}, the parity-check matrix $H$ is an incidence matrix of a PBD of index one.
Take an arbitrary column $\boldsymbol{c_1}$ of $H$.
Write the block $B_1$ which corresponds to $\boldsymbol{c_1}$ as $\{v_1, \dots, v_{|B_1|}\}$.
Since every row has at least two ones, we can find another column $\boldsymbol{c_2}$
which corresponds to $B_2 = \{v_1, v_{|B_1|+1},\dots, v_{|B_1|+|B_2|-1}\}$,
where $v_i \not= v_j$ for any $i$ and $j$, $i \not= j$.
Take the third column $\boldsymbol{c_3}$
representing the block $B_3$ which contains the pair $\{v_2, v_{{|B_1|}+1}\}$.
The three columns $\boldsymbol{c_1}$, $\boldsymbol{c_2}$, and $\boldsymbol{c_3}$ induce a $6$-cycle, a contradiction.
\end{proof}

Thus, a homogeneous quantum LDPC code requiring only one ebit has girth six,
which is the largest possible,
if and only if the code is obtained from an odd-replicate PBD of index one.

An important case is when the classical ingredient is a regular LDPC code.
In this case, a simple necessary condition is asymptotically sufficient:
\begin{theorem}\label{th:existenceregularEAQECC}
A necessary condition for the existence of a regular homogeneous quantum \textup{LDPC} code which requires only one ebit and
is of length $n$, girth six, and column weight $\mu$ is that the number
$\frac{-1+\sqrt{1+4n\mu(\mu-1)}}{2(\mu-1)}$ is an odd integer.
Conversely, for any integer $\mu \geq 2$ there exists a constant $n_{\mu}$
such that for $n > n_{\mu}$ the necessary condition is sufficient.
\end{theorem}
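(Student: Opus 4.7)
The plan is to reduce the existence question to that of Steiner $2$-designs and then invoke Wilson's existence theorem. By Theorem~\ref{th:equivalencePBD}, the desired quantum code corresponds to the incidence matrix of a nontrivial odd-replicate PBD of index one with smallest block size greater than one. The regularity assumption with column weight $\mu$ translates to requiring every block to have size exactly $\mu$; that is, the underlying combinatorial design must be a Steiner $2$-design $S(2,\mu,v)$ whose replication number is odd.

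For necessity, I would use the standard parameter identities for an $S(2,\mu,v)$, namely $b = v(v-1)/[\mu(\mu-1)]$ and $r = (v-1)/(\mu-1)$. Since the code length $n$ equals the number of blocks $b$, solving the resulting quadratic for $v$ and substituting into the expression for $r$ yields exactly
\[
r = \frac{-1+\sqrt{1+4n\mu(\mu-1)}}{2(\mu-1)},
\]
and the odd-replicate condition forces this quantity to be an odd positive integer.

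For sufficiency, given $n$ for which the displayed quantity is an odd integer $r$, I would set $v := 1 + r(\mu-1)$ and check that the divisibility conditions required by Theorem~\ref{th:existencePBD} with $K = \{\mu\}$, namely $v-1 \equiv 0 \pmod{\mu-1}$ and $v(v-1) \equiv 0 \pmod{\mu(\mu-1)}$, are automatic from this definition together with $n = v(v-1)/[\mu(\mu-1)]$. Wilson's theorem then produces an $S(2,\mu,v)$ for every sufficiently large $v$; since $v$ is a strictly increasing function of $n$, there is a threshold $n_\mu$ such that for all $n > n_\mu$ satisfying the necessary condition we obtain a valid design. Its incidence matrix is a parity-check matrix of a regular LDPC code whose replication number $r$ is odd, which by Theorem~\ref{th:equivalencePBD} yields the required one-ebit homogeneous quantum LDPC code.

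The hard part is really only the routine parameter book-keeping of confirming that the oddness-plus-integrality of the displayed expression exactly matches Wilson's divisibility hypotheses, and verifying that the design thus produced is nontrivial with girth precisely six. Nontriviality is immediate once $v > \mu$, which follows for large $n$, and girth six (rather than larger) is forced by Theorem~\ref{th:nogirtheight}, so no new ideas beyond the design-theoretic dictionary already established in Theorem~\ref{th:equivalencePBD} seem to be required.
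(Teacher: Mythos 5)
Your proposal is correct and follows essentially the same route as the paper: reduce to an odd-replicate $S(2,\mu,v)$ via Theorem~\ref{th:equivalencePBD}, derive the displayed expression from the identities $n = v(v-1)/[\mu(\mu-1)]$ and $r = (v-1)/(\mu-1)$, and obtain sufficiency by checking Wilson's divisibility conditions and applying Theorem~\ref{th:existencePBD} with $K=\{\mu\}$. The only difference is cosmetic: you spell out the translation from the threshold $v_K$ to the threshold $n_\mu$ and the nontriviality/girth checks slightly more explicitly than the paper does.
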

\begin{proof}
Let $H$ be a parity-check matrix of a classical LDPC code.
Assume that $H$ yields a regular homogeneous quantum LDPC code
which requires only one ebit and is of length $n$, girth six, and column weight $\mu$.
By Theorem \ref{th:equivalencePBD}, $H$ forms an incidence matrix of a PBD of index one.
Because $H$ can also be seen as a parity-check matrix of a classical regular LDPC code,
the column weights are uniform. Hence, $H$ can be viewed as an incidence matrix of an $S(2,\mu,v)$ for some $v$.
Because the number of blocks of an $S(2,\mu,v)$ is $\frac{v(v-1)}{\mu(\mu-1)}$, we have $n = \frac{v(v-1)}{\mu(\mu-1)}$.
Hence,
\begin{eqnarray}
v = \frac{1+\sqrt{1+4n\mu(\mu-1)}}{2}.\label{eq1}
\end{eqnarray}
The number of occurrences of each point of an $S(2,\mu,v)$ is $\frac{v-1}{\mu-1}$.
Since $H$ defines an odd-replicate design, a necessary condition for the existence of a homogeneous quantum LDPC code
satisfying the stated properties is that \[\frac{v-1}{\mu-1} = \frac{-1+\sqrt{1+4n\mu(\mu-1)}}{2(\mu-1)}\] is odd.
Assume that the necessary condition holds.
Then, we have \[\frac{1+\sqrt{1+4n\mu(\mu-1)}}{2}-1 \equiv 0 \pmod{\mu -1}\]
and
\begin{eqnarray*}
\lefteqn{\frac{1+\sqrt{1+4n\mu(\mu-1)}}{2}(\frac{1+\sqrt{1+4n\mu(\mu-1)}}{2}-1)}\hspace{45mm}\\
&=& n\mu(\mu-1)\\
&\equiv& 0 \pmod{\mu(\mu-1)}.
\end{eqnarray*}
Applying Theorem \ref{th:existencePBD} by plugging $\alpha(K) = \mu -1$ and $\beta(K) = \mu(\mu-1)$ completes the proof.
\end{proof}

As we have seen in this section, there is a strong relation between homogeneous quantum LDPC codes and Steiner $2$-designs.
This equivalence implies that the framework given in \cite{FCVBT} encompasses
all regular homogeneous quantum LDPC codes with girth six which require only one ebit.

Particularly useful facts are that the original proof of Theorem \ref{th:existencePBD} is constructive
and that there are many known explicit constructions for PBDs with various properties.
For more details on explicit combinatorial constructions useful to entanglement-assisted quantum LDPC codes,
we refer the reader to \cite{FCVBT} and references therein.

\section{Rates, distances, and numbers of $6$-cycles}\label{sc:Rates}
Next we examine the possible code parameters.
As shown in the proof of Theorem \ref{th:equivalencePBD}, the number of blocks of a PBD$(v, K, 1)$ corresponds to the code length.
The number of rows of the parity-check matrix of the underlying classical LDPC code is the number $v$ of points.
The number of points in each block is the weight of the corresponding column in the parity-check matrix.
Hence, $K$ determines the possible column weights.
Because the parity-check equations are labeled by the points of the PBD,
the weight of each row is the replication number of the corresponding point.
If the corresponding classical LDPC code is regular,
its parity-check matrix forms an incidence matrix of an $S(2,\mu,v)$,
which means that the code is of length $\frac{v(v-1)}{\mu(\mu-1)}$, constant column weight $\mu$, and constant row weight $\frac{v-1}{\mu-1}$.
The number of rows is $v$.
In the reminder of this section, we investigate code parameters further in detail.

We first consider the rates for the case when classical ingredients are regular.
The dimension of a homogeneous quantum LDPC code is determined by the rank of the corresponding parity-check matrix of the classical LDPC code.
By Theorem \ref{th:equivalencePBD}, we only need to know the rank of the incidence matrix of the combinatorial design equivalent to the classical code.
Hillebrandt \cite{hillebrandt} gave a bound on the rank of an incidence matrix of a Steiner $2$-design.
\begin{theorem}[Hillebrandt \cite{hillebrandt}]\label{bound:rkSteiner}
The rank of an incidence matrix $H$
of an $S(2,\mu,v)$ satisfies the following  inequalities:
\[ \left\lceil \frac{1}{2}+\sqrt{\frac{1}{4}+\frac{(v-1)(v-\mu)}{\mu}}\right\rceil \leq \rk{H} \leq v.\]
\end{theorem}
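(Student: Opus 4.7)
The upper bound $\rk{H} \leq v$ is immediate, since $H$ is a $v\times b$ matrix with only $v$ rows. For the lower bound I would work with the structural identity $HH^T = (r-1)I+J$ over the integers, where $r=(v-1)/(\mu-1)$ is the replication number of the design and $J$ is the $v\times v$ all-ones matrix; this is a direct consequence of the defining property of an $S(2,\mu,v)$, since the diagonal entries count blocks through a single point and each off-diagonal entry counts the unique block through the corresponding pair of points. Because the eigenvalues $r-1$ and $r+v-1$ of $(r-1)I+J$ are positive for $r\geq 2$, the matrix $HH^T$ has full $\mathbb{Q}$-rank $v$; the content of the theorem is therefore a bound on how much the $\mathbb{F}_2$-rank can drop below $v$.

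Setting $t = \rk{H}$, the assertion $t \geq \lceil \tfrac{1}{2}+\sqrt{\tfrac{1}{4}+\tfrac{(v-1)(v-\mu)}{\mu}}\,\rceil$ is equivalent, after solving the quadratic, to the inequality $t(t-1) \geq (v-1)(v-\mu)/\mu$. When $r$ is even, the identity above reduces modulo $2$ to $HH^T\equiv I+J$, whose $\mathbb{F}_2$-rank is at least $v-1$ and so already dominates the claimed bound. The delicate case is therefore $r$ odd, where $HH^T \equiv J$ has $\mathbb{F}_2$-rank only one, so $HH^T$ alone is useless and one must argue directly with $H$.

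For the odd case, my plan is to pick $t$ linearly independent rows indexed by a point set $I\subseteq V$ with $|I|=t$, and to introduce the auxiliary quantities $c_B = |I\cap B|$ for every block $B$. The two defining identities of an $S(2,\mu,v)$ translate to $\sum_B c_B = tr$ and $\sum_B \binom{c_B}{2} = \binom{t}{2}$, subject to $0 \leq c_B \leq \mu$ and the total block count $b = v(v-1)/[\mu(\mu-1)]$. A direct application of Cauchy--Schwarz or Jensen to these identities yields only the trivial $t \leq v$, so the essential subtlety lies in combining the $\lambda = 1$ property with the parity and weight conditions on the non-basis rows: every such row must equal an odd-sized $\mathbb{F}_2$-sum of basis rows and still have weight exactly $r$. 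My intended route, and the main obstacle of the proof, is to count how many distinct weight-$r$ combinations of this form can fit into the row span, using the fact that any two must overlap in exactly one coordinate, and thereby extract the quadratic bound $t(t-1) \geq (v-1)(v-\mu)/\mu$. Taking ceilings after solving this quadratic then yields the stated expression.
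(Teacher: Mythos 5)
The paper offers no proof of this statement at all: it is quoted directly from Hillebrandt's article on the $p$-rank of $(0,1)$-matrices, so there is no internal argument to compare yours against. Judged on its own terms, your write-up gets the easy parts right: the upper bound $\rk{H}\leq v$ is immediate; the identity $HH^T=(r-1)I+J$ with $r=(v-1)/(\mu-1)$ is correct; the reduction of the lower bound to $t(t-1)\geq (v-1)(v-\mu)/\mu$ for $t=\rk{H}$ is sound; and the even-replication case is correctly dispatched via $\rk{H}\geq\rk{HH^T}=\rk{I+J}\geq v-1$, which dominates the claimed bound because $v(\mu-1)\geq\mu$.

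The case that actually matters, however --- $r$ odd, which is exactly the situation the paper is concerned with, since its equivalence theorem forces odd-replicate designs --- is not proved. The quantities $c_B=|I\cap B|$ and the identities $\sum_B c_B=tr$ and $\sum_B\binom{c_B}{2}=\binom{t}{2}$ hold for \emph{every} $t$-subset of points, whether or not the corresponding rows are linearly independent; they carry no information about the rank, so no application of Cauchy--Schwarz or Jensen to them alone can bound $t$ from below, as you yourself observe. Everything after that observation is a description of what you intend to do (``count how many distinct weight-$r$ combinations\ldots can fit into the row span'') rather than an argument, and you explicitly label this step ``the main obstacle of the proof.'' That step is the entire content of the theorem: nothing in the proposal establishes why the constraints that non-basis rows are odd-size $\mathbb{F}_2$-sums of basis rows, have weight exactly $r$, and pairwise intersect in exactly one coordinate force $t(t-1)\geq(v-1)(v-\mu)/\mu$. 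As written, the proposal is a correct reduction plus a research plan, not a proof; to complete it you must supply the missing counting argument, or simply cite Hillebrandt as the paper does.
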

Hence, we have the following bound on the dimension:
\begin{theorem}\label{th:regulardimensionbound}
If there exists a regular homogeneous quantum \textup{LDPC} code with girth six and column weight $\mu$
whose parameters are
$[[n,k;1]]$, then
\[n - \sqrt{1+4n\mu(\mu-1)} \leq k\]
and
\[k \leq  n - 2\left\lceil \frac{1}{2}+\sqrt{\frac{1}{4}+\frac{(v-1)(v-\mu)}{\mu}}\right\rceil + 1,\]
where
\[v = \frac{1+\sqrt{1+4n\mu(\mu-1)}}{2}.\]
\end{theorem}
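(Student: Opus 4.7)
The plan is to reduce the problem to controlling the $\mathbb{F}_2$-rank of the classical parity-check matrix $H$, and then plug in the two-sided rank estimate from Theorem \ref{bound:rkSteiner}. Because the code is a regular homogeneous quantum LDPC code with girth six consuming one ebit, Theorem \ref{th:equivalencePBD} together with the regularity assumption forces $H$ to be an incidence matrix of an $S(2,\mu,v)$. Counting blocks gives $n = \frac{v(v-1)}{\mu(\mu-1)}$, and solving this quadratic in $v$ yields $v = \frac{1+\sqrt{1+4n\mu(\mu-1)}}{2}$, which matches the expression in the statement.

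Next I would translate the quantum dimension $k$ into the language of $\rk{H}$. The classical code defined by $H$ has length $n$ and dimension $n-\rk{H}$. By the formula recalled in Section \ref{sc:EAQLDPC}, a homogeneous EAQECC built from this $H$ with $c=1$ ebit has dimension
\begin{equation*}
k \;=\; 2(n-\rk{H}) - n + 1 \;=\; n - 2\rk{H} + 1.
\end{equation*}
Thus bounds on $k$ are equivalent to bounds on $\rk{H}$, and the problem reduces to invoking Theorem \ref{bound:rkSteiner}.

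For the upper bound on $k$, I substitute the lower bound on $\rk{H}$ from Hillebrandt's inequality, which yields
\begin{equation*}
k \;\leq\; n - 2\left\lceil \tfrac{1}{2}+\sqrt{\tfrac{1}{4}+\tfrac{(v-1)(v-\mu)}{\mu}}\right\rceil + 1.
\end{equation*}
For the lower bound, I use the trivial upper bound $\rk{H} \leq v$ arising from $H$ having only $v$ rows, together with the identity $2v - 1 = \sqrt{1+4n\mu(\mu-1)}$. This gives $k \geq n - 2v + 1 = n - \sqrt{1+4n\mu(\mu-1)}$, as required.

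There is no real obstacle here: the argument is an assembly of previously stated facts. The only step that needs a little care is making sure the dimension formula $k = 2k_{\mathrm{c}} - n + c$ is applied correctly with $c=1$ and with $k_{\mathrm{c}} = n - \rk{H}$ rather than $v - \rk{H}$, since $H$ is a $v \times n$ matrix and its rows are parity checks (not codewords) of a length-$n$ code. Once that bookkeeping is in place, the two inequalities drop out of Theorem \ref{bound:rkSteiner} by direct substitution.
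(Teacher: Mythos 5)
Your proposal is correct and follows essentially the same route as the paper's own proof: identify $H$ as an incidence matrix of an $S(2,\mu,v)$ via Theorem \ref{th:equivalencePBD} and Equation (\ref{eq1}), convert the dimension formula to $k = n - 2\rk{H} + 1$, and apply both sides of Theorem \ref{bound:rkSteiner}. Your explicit bookkeeping of $k_{\mathrm{c}} = n - \rk{H}$ and the identity $2v-1 = \sqrt{1+4n\mu(\mu-1)}$ simply spells out steps the paper leaves implicit.
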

\begin{proof}
As is stated in Equation (\ref{eq1}),
a regular homogeneous quantum LDPC code with length $n$, girth six, and column weight $\mu$ which requires only one ebit
must be constructed from an incidence matrix $H$ of a Steiner $2$-design
of order $\frac{1+\sqrt{1+4n\mu(\mu-1)}}{2}$ and block size $\mu$.
Since the incidence matrix requires only one ebit,
the dimension of the quantum LDPC code is $k = n - 2\rk{H} +1$.
Applying Theorem \ref{bound:rkSteiner} to this relation between the dimension $k$ and the rank of $H$ completes the proof.
\end{proof}

If one wishes to obtain a code of highest possible rate for a given length and row and column weights, 
the incidence matrix of the corresponding Steiner 2-design must be of minimum rank.
A Steiner 2-design $S(2,3,2^m -1)$ has odd replication number equal to $2^{m-1} -1$.
It is known that the rank of any $S(2,3,2^m -1)$ is greater than or equal to $2^m -1 -m$,
and the minimum $2^m - 1 -m$ is achieved if and only if the design is isomorphic to the classical
design whose points and blocks are the points and lines of the binary
projective geometry PG$(m-1,2)$ \cite{DHV}.

An odd-replicate $S(2,3,v)$ exists if and only if $v \equiv 3, 7 \pmod{12}$ \cite{BJL}.
The ranks of such designs were determined by Assmus \cite{A}.

\begin{theorem}[Assmus \cite{A}]\label{th:assmus}
For any $v \equiv 3, 7 \pmod{12}$, where $v = 2^tu-1$ and $u$ is odd,
and any integer $i$ with $1 \leq i < t$,
there exists an $S(2,3,v)$ of rank equal to $v-t+i$.
\end{theorem}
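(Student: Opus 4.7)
My plan is to argue by induction on $t$, building an $S(2,3,v)$ of the prescribed $2$-rank from a smaller Steiner triple system via a composition construction, and using the projective geometry $\textup{PG}(t-1,2)$ as a seed design of extremal rank.

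First I would dispose of the case $u=1$, i.e., $v=2^t-1$. The Doyen--Hubaut--Vandensavel result invoked just before the statement supplies $\textup{PG}(t-1,2)$ as an $S(2,3,2^t-1)$ of minimum $2$-rank $v-t$. To reach each intermediate rank $v-t+i$ for $1 \le i < t$, I would perform $i$ Pasch (trade) switches on the $\textup{PG}(t-1,2)$ blocks: each switch replaces four triples covering the pairs of a six-element subset by the unique alternative quadruple of triples covering the same pairs. A well-placed Pasch switch modifies the $\mathbb{F}_2$ row space by at most one dimension, and the claim is that in $\textup{PG}(t-1,2)$ there are enough mutually disjoint Pasch configurations in general position relative to the PG code so that $i$ switches actually increase the rank by exactly $i$.

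To lift from $u=1$ to arbitrary odd $u$, I would use a standard design-theoretic composition that takes an $S(2,3,2^t-1)$ and an auxiliary group-divisible or Kirkman structure on the remaining points and produces an $S(2,3,2^t u - 1)$. The composition should be chosen so that the incidence matrix of the composite design decomposes, up to elementary row and column operations over $\mathbb{F}_2$, as a direct sum of a copy of the seed incidence matrix with an auxiliary block whose $2$-rank is independent of which STS$(2^t-1)$ was used as the seed. The composite $2$-rank is then the $2$-rank of the seed plus a constant depending only on $u$ and $t$, and by choosing the seed of rank $2^t-1-t+i$ from the $u=1$ case I obtain $v-t+i$ in the composite.

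The main obstacle is the rank bookkeeping in both steps. For the Pasch switches one must prove a general-position lemma ensuring that the $i$ successive switches contribute genuinely independent perturbations modulo the base code; each switch is individually guaranteed only not to decrease the rank, so ruling out cancellations requires exploiting the rigidity of $\textup{PG}(t-1,2)$. For the composition step, the delicate point is identifying a recipe under which the auxiliary block shares no accidental $\mathbb{F}_2$-linear dependencies with the seed block, so that the rank is additive rather than superadditive. Both issues are essentially combinatorial and form the technical core of Assmus's argument; once they are in hand, the numerical identity $v-t+i = (2^t-1-t+i) + (\text{auxiliary contribution})$ is a routine check.
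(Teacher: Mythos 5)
First, a point of reference: the paper does not prove this statement at all --- it is quoted from Assmus \cite{A} as an external result --- so there is no internal proof to compare yours against, and I can only judge the proposal on its own terms. Judged that way, the architecture has a concrete, fatal gap in the lifting step. You propose to realize rank $v-t+i$ for $v=2^tu-1$ with $u>1$ by composing a seed $S(2,3,2^t-1)$ of rank $(2^t-1)-t+i$ with an auxiliary structure contributing a constant to the $2$-rank. But the required seeds do not exist when $t=2$ or $t=3$: the unique $S(2,3,3)$ has $2$-rank $1=3-2$ and the unique $S(2,3,7)$ (the Fano plane) has $2$-rank $4=7-3$, so there is no seed of rank $3-2+1$, nor of rank $7-3+i$ for $i=1,2$. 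Yet the theorem makes true, nonvacuous claims for exactly these values of $t$ once $u>1$, e.g.\ an $S(2,3,27)$ of rank $26$ ($t=2$, $u=7$, $i=1$) and $S(2,3,39)$'s of ranks $37$ and $38$ ($t=3$, $u=5$). So the intermediate ranks for composite $v$ must be created by the construction itself, not inherited additively from a seed on $2^t-1$ points; a rank-additive composition of the kind you describe cannot prove the theorem. (Relatedly, the statement as quoted is actually false for $v=3$ and $v=7$ themselves, which is why Theorem \ref{th:STSEAQECCdimension} carries the hypothesis $n>7$; any proof has to negotiate these small cases.)

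Even where the seeds do exist ($t\geq 4$, $u=1$), the two lemmas you defer are not bookkeeping --- they are the theorem. For the Pasch switches: each switch replaces four blocks by their translates under the characteristic vector of a fixed pair $\{c,d\}$, so the code moves by at most one dimension; to gain exactly $i$ dimensions you must exhibit $i$ pairwise disjoint Pasch configurations in PG$(t-1,2)$ whose blocks are each redundant in the line code and whose associated weight-two vectors are independent modulo that code. Since the line code has codimension only $t$, this is a tight geometric claim requiring an explicit argument, not a genericity appeal. For the composition, I know of no standard PBD/GDD filling under which the $2$-rank is exactly additive, and proving additivity would again require controlling the dual code of the auxiliary part. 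The route actually taken in the literature (Assmus \cite{A}, building on Doyen--Hubaut--Vandensavel \cite{DHV}) is essentially orthogonal to yours: it characterizes the binary code spanned by the blocks via its relation to the Hamming and Reed--Muller codes, and does the constructive work with the doubling construction $S(2,3,v')\rightarrow S(2,3,2v'+1)$ --- which increments $t$ at fixed $u$ and steers the rank through the choice of one-factorization --- rather than fixing $t$ and growing $u$.
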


As a corollary, we have the following necessary and sufficient conditions.

\begin{theorem}\label{th:STSEAQECCdimension}
Let $n > 7$ be an integer.
There exists a regular homogeneous quantum \textup{LDPC} code of length $n$, dimension $k$, girth six, and column weight three
which requires only one ebit if and only if
\[\sqrt{24n+1} \equiv 5 \pmod{8}\]
and
\[n - \sqrt{24n+1} \leq k \leq n - \sqrt{24n+1} + 2t - 2,\]
where $t$ is the integer satisfying $\sqrt{24n+1} = 2^{t+1}u-3$ with $u$ odd.
\end{theorem}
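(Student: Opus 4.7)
The plan is to convert the existence question into one about odd-replicate Steiner triple systems via Theorem \ref{th:equivalencePBD} and Theorem \ref{th:existenceregularEAQECC}, and then to read off the attainable parameters from Assmus's rank theorem (Theorem \ref{th:assmus}).

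First I would set up the translation. Specializing Theorem \ref{th:existenceregularEAQECC} to $\mu = 3$, a code of the stated form exists if and only if there is an odd-replicate $S(2,3,v)$ with $v = (1+\sqrt{24n+1})/2$, in which case $n = v(v-1)/6$. An $S(2,3,v)$ exists iff $v \equiv 1, 3 \pmod{6}$, and its replication number $(v-1)/2$ is odd iff $v \equiv 3 \pmod{4}$; taken together these force $v \equiv 3, 7 \pmod{12}$. Since $\sqrt{24n+1} = 2v-1$, the congruence $v \equiv 3 \pmod{4}$ is equivalent to $\sqrt{24n+1} \equiv 5 \pmod{8}$, which recovers condition (i).

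Next I would turn to the dimension. By Theorem \ref{th:regulardimensionbound}, $k = n - 2\rk{H} + 1$, so bounding $k$ is the same as bounding $\rk{H}$ over all odd-replicate $S(2,3,v)$. Assmus's theorem produces, for $v + 1 = 2^t u$ with $u$ odd, an $S(2,3,v)$ of rank $v - t + i$ for every $1 \leq i \leq t - 1$, and a generic construction supplies a design of full rank $v$. Plugging these ranks into the dimension formula yields $k$-values $n - 2v + 1, n - 2v + 3, \ldots, n - 2v + 2t - 1$, whose extrema are precisely $n - \sqrt{24n+1}$ and $n - \sqrt{24n+1} + 2t - 2$.

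The main obstacle will be the necessity direction of the dimension bound: ruling out $\rk{H} < v - t + 1$ for every odd-replicate $S(2,3,v)$ at hand. Establishing such a lower bound on the $2$-rank must use the factorization $v + 1 = 2^t u$ together with the index-one and odd-replication constraints; the hypothesis $n > 7$ is presumably needed to exclude the Fano plane, the smallest case where the elementary analysis breaks down. Once this rank floor is secured, the trivial upper bound $\rk{H} \leq v$ together with Assmus's existence results fills in every intermediate rank and completes both directions of the equivalence.
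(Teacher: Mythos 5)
Your proposal follows the paper's proof exactly in outline: reduce to odd-replicate Steiner triple systems (the congruence $\sqrt{24n+1}\equiv 5 \pmod 8$ being the translation of $v\equiv 3,7\pmod{12}$), use $k=n-2\rk{H}+1$, and obtain the attainable ranks from Theorem \ref{th:assmus} together with the existence of a full-rank $S(2,3,v)$ for $v>7$. The paper's proof consists of precisely those two citations and nothing more, so the step you isolate as ``the main obstacle''---a lower bound on the $2$-rank of every odd-replicate $S(2,3,v)$---is not argued in the paper either; it is implicitly delegated to the cited Doyen--Hubaut--Vandensavel/Assmus classification. Your identification of where the real content lies is therefore accurate.

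However, the rank floor you intend to establish, $\rk{H}\geq v-t+1$ for $v+1=2^tu$ with $u$ odd, is false when $u=1$, and no amount of work with the index-one and odd-replication constraints will salvage it: by the Doyen--Hubaut--Vandensavel result quoted in the paper immediately before Theorem \ref{th:assmus}, the points and lines of $\mathrm{PG}(m-1,2)$ form an odd-replicate $S(2,3,2^m-1)$ whose $2$-rank is exactly $v-t$ (here $t=m$). For instance, $\mathrm{PG}(3,2)$ gives an $S(2,3,15)$ of rank $11$ and hence a $[[35,14;1]]$ code, whereas the claimed upper bound evaluates to $35-29+8-2=12$. The hypothesis $n>7$ does not exclude this case; it only removes the Fano plane, where the issue is instead that no full-rank companion design exists. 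So your plan stalls at exactly the step you flagged, and it stalls because the statement you would need is not true for $v=2^t-1$; any complete argument must treat the $u=1$ case separately, something the paper's two-sentence proof also glosses over.
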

\begin{proof}
For every $v \equiv 1, 3 \pmod{6}$, $v > 7$, there exists an $S(2,3,v)$ of full rank \cite{DHV}. 
Theorem \ref{th:assmus} provides all possible $S(2,3,v)$s with deficient ranks.
\end{proof}

It is notable that Theorems \ref{th:regulardimensionbound} and \ref{th:STSEAQECCdimension} suggest that
homogeneous quantum LDPC codes requiring only one ebit typically have very high rates.
In fact, because the number of columns in an incidence matrix of a Steiner $2$-design
is the largest possible for a matrix with a given number of rows that avoids $4$-cycles,
the rate of the corresponding classical LDPC code is the highest possible in a sense.

At the same time, however, the extremely high rates imply that it is impossible to obtain an infinite family of
LDPC codes of rate bounded away from one for some reasonable degree distribution.
Because the number of blocks in a PBD of order $v$ is $c\cdot v^2$ for some constant $c$,
the code length is $c\cdot v^2$ with $v$ being the number of rows of the corresponding parity-check matrix.
Hence, the combinatorial design theoretic construction described here is more suitable
when the code designer wishes to deterministically design a code of moderate length
with specific properties desirable for a particular purpose.

For instance, it is known that redundant rows in a parity-check matrix can help improve error correction performance of the sum-product algorithm \cite{Mbook}.
One might then wish to design a parity-check matrix with a large number of redundant rows while completely avoiding $4$-cycles.
The minimum distance should not be too small either in a normal situation.
Such a highly structured matrix $H$ would be nearly impossible
to obtain by a random draw when there is another stringent condition that $c = \rk{HH^T}$ must be kept small.
However, these conditions can easily be translated into the language of combinatorial designs,
and hence one might be able to tell whether such $H$ exists and, if it does, how to explicitly construct it.
In fact, the above constraints were effectively exploited to demonstrate
that high performance EAQECCs do not necessarily require a lot of ebits \cite{HYH2}.

The minimum distance of a binary linear code whose parity-check matrix forms an incidence matrix
of an $S(2,\mu,v)$ is at least $\mu + 1$
\footnote{This can be easily seen by taking an arbitrary block $B$ and
counting how many blocks it requires to form a linearly dependent set of columns in the corresponding parity-check matrix.
Because no pair of points appear in more than one block,
each additional column can share a one at at most one row with the column corresponding to $B$.
Hence, any linearly dependent set of columns in the parity-check matrix is of size at least  $\mu + 1$.}.
While it appears to be difficult to obtain the exact upper bound on the minimum distance in general,
incidence matrices of $S(2,\mu,v)$s can give minimum distances large enough
for the standard sum-product algorithm at moderate length (see \cite{FCVBT, HYH2}).
In fact, the Desarguesian projective plane of order $2^t$ gives an entanglement-assisted quantum LDPC code of
length $4^t+2^t+1$ and dimension $4^t+2^t-2\times3^t$ with the corresponding parity-check matrix being of minimum distance $2^t+2$,
which performs very well over the depolarizing channel.

To further study the minimum distances of LDPC codes based on Steiner $2$-designs,
we define combinatorial design theoretic notions.
A \textit{configuration} ${\mathcal C}$ in an $S(2,\mu,v)$, $(V,{\mathcal B})$,
is a subset ${\mathcal C} \subseteq {\mathcal B}$.
The set of points appearing in at least one block of a configuration ${\mathcal C}$
is denoted by $V({\mathcal C})$.
Two configurations ${\mathcal C}$ and ${\mathcal C}'$ are \textit{isomorphic}
if there exists a bijection $\phi : V({\mathcal C}) \rightarrow V({\mathcal C}')$
such that for each block $B \in {\mathcal C}$,
the image $\phi(B)$ is a block in ${\mathcal C}'$.
When $|{\mathcal C}|=i$,
a configuration ${\mathcal C}$ is  an \textit{$i$-configuration}.
A configuration ${\mathcal C}$ is  \textit{even}
if for every point $a$ appearing in ${\mathcal C}$
the number $|\{B: a \in B \in {\mathcal C}\}|$
of blocks containing $a$ is even.

The notion of minimum distance can be translated into the language of combinatorial designs.
An $S(2,\mu,v)$ is  {\it $r$-even-free} if for every integer $i$ satisfying $1\leq i \leq r$ it contains no even $i$-configurations.
Because the minimum distance of a binary linear code is the size of a smallest linearly dependent set of columns in its parity-check matrix,
the minimum distance of a linear code based on a Steiner $2$-design is determined by its even-freeness:
\begin{proposition}
The minimum distance of a binary linear code whose parity-check matrix forms an incidence matrix of a Steiner $2$-design is $d$
if and only if the corresponding Steiner $2$-design is $(d-1)$-even-free but not $d$-even-free.
\end{proposition}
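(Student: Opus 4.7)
The plan is to translate the standard coding-theoretic characterization of minimum distance into the combinatorial language of configurations, and observe that the translation is essentially tautological. Recall that the minimum distance of a binary linear code with parity-check matrix $H$ equals the smallest number of columns of $H$ whose sum over $\mathbb{F}_2$ is the zero vector, i.e., the smallest cardinality of a linearly dependent subset of columns. I would take this as the starting point.

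First, I would set up a bijection between subsets of columns of the incidence matrix $H$ and configurations of the Steiner $2$-design $(V,\mathcal{B})$. Each column of $H$ corresponds to a block $B \in \mathcal{B}$, and each row to a point $a \in V$; the $(a,B)$-entry is $1$ precisely when $a \in B$. Thus a subset $\mathcal{C} \subseteq \mathcal{B}$ of size $i$ corresponds to a choice of $i$ distinct columns, and conversely.

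Next, I would observe that the sum over $\mathbb{F}_2$ of the columns indexed by $\mathcal{C}$ is the zero vector if and only if, for every point $a$, the number of blocks in $\mathcal{C}$ containing $a$ is even. This is exactly the defining property of an even configuration. Consequently, a linearly dependent set of $i$ columns is the same thing as an even $i$-configuration, and vice versa.

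Combining these two observations yields the statement. The minimum distance equals $d$ if and only if (i) there is no linearly dependent set of columns of size less than $d$, which by the correspondence means no even $i$-configuration exists for $1 \leq i \leq d-1$, i.e.\ the design is $(d-1)$-even-free, and (ii) there is a linearly dependent set of exactly $d$ columns, which means an even $d$-configuration does exist, i.e.\ the design fails to be $d$-even-free. There is no substantive obstacle here; the only point to handle with care is to confirm that a configuration is defined as a subset of $\mathcal{B}$ (no repeated blocks), which matches the fact that one considers distinct columns of $H$ when computing minimum distance, so the bijection is exact.
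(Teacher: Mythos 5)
Your proposal is correct and follows exactly the route the paper takes: the paper states this proposition as an immediate consequence of the observation that the minimum distance is the size of a smallest linearly dependent set of columns, which under the incidence-matrix correspondence is precisely a smallest even configuration. Your write-up just makes that tautological translation explicit, including the (correct) remark that configurations being subsets of $\mathcal{B}$ matches the use of distinct columns.
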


A fairly tight bound on the minimum distance is available for the special case
when the parity-check matrix has constant column weight three and gives a regular LDPC code.

By definition every $r$-even-free $S(2,3,v)$, $r\geq 2$, is also $(r-1)$-even-free.
Every $S(2,3,v)$ is trivially $3$-even-free.
For $v >3$ an $S(2,3,v)$ may or may not be $4$-even-free.
Up to isomorphism, the only even $4$-configuration is the {\it Pasch} configuration.
It can be written by six points and four blocks:
$\{\{a, b, c\}, \{a, d, e\}, \{f, b, d\}, \{f, c, e\}\}$.
For the list of all the small configurations in an $S(2,3,v)$ and more complete treatments, we refer the reader to \cite{TRIPLESYSTEMS} and \cite{stopsts}.
Because every block in an $S(2,3,v)$ has three points, no $i$-configuration for $i$ odd  is even.
Hence, a $4$-even-free $S(2,3,v)$ is $5$-even-free as well, which means that an $S(2,3,v)$ is $5$-even-free if and only if it contains no Pasch configuration.

The minimum distance $d$ of a classical LDPC code is the smallest number of columns in its parity-check matrix $H$
that add up to the zero vector over $\mathbb{F}_2^v$.
If $H$ forms an incidence matrix of an $S(2,3,v)$, then
a set of $d$ columns that add up to the zero vector is equivalent to an even $d$-configuration in the $S(2,3,v)$.
Hence, as an immediate corollary of Theorem \ref{th:equivalencePBD} and Equation (\ref{eq1}), we have the following proposition:

\begin{proposition}\label{evenisdistance}
A classical LDPC code of length $n$, minimum distance $d$, and constant column weight three
forms a homogeneous quantum LDPC code of girth greater than four requiring only one ebit
if and only if $\frac{-1+\sqrt{1+24n}}{4}$ is an odd integer and
the parity-check matrix forms an incidence matrix of a $(d-1)$-even-free $S(2,3,\frac{1+\sqrt{1+24n}}{2})$ that is not $d$-even-free.
\end{proposition}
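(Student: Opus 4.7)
My plan is to derive Proposition \ref{evenisdistance} as a direct corollary of Theorem \ref{th:equivalencePBD}, Equation (\ref{eq1}), and the discussion immediately preceding the proposition that equates linear dependencies among columns with even configurations. The two directions are essentially the same translation, so I would phrase the argument as a chain of equivalences rather than proving necessity and sufficiency separately.

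For the forward direction, I would start with a classical LDPC code of length $n$, constant column weight three, and minimum distance $d$ whose parity-check matrix $H$ yields a homogeneous quantum LDPC code with girth greater than four requiring only one ebit. Theorem \ref{th:equivalencePBD} forces $H$ to be the incidence matrix of a nontrivial odd-replicate PBD of index one with smallest block size greater than one; since every column of $H$ has weight three, every block has size three, so the PBD is an $S(2,3,v)$. Substituting $\mu = 3$ into Equation (\ref{eq1}) gives $v = \frac{1+\sqrt{1+24n}}{2}$. Every point of an $S(2,3,v)$ has replication number $\frac{v-1}{2}$, which equals $\frac{-1+\sqrt{1+24n}}{4}$, and the odd-replicate hypothesis is precisely that this quantity is an odd integer. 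For the even-freeness, I would invoke the paragraph preceding the proposition: a set of $i$ columns of $H$ summing to the zero vector over $\mathbb{F}_2$ corresponds exactly to an even $i$-configuration of the design. Hence the smallest linear dependence has size $d$ if and only if the $S(2,3,v)$ is $(d-1)$-even-free but not $d$-even-free.

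For the converse direction, I would run the same chain in reverse: an $S(2,3,v)$ with $v = \frac{1+\sqrt{1+24n}}{2}$ has $n$ blocks and odd replication number $\frac{v-1}{2}$ by hypothesis, so its incidence matrix is a parity-check matrix of a classical LDPC code of length $n$ and constant column weight three that satisfies the hypotheses of Theorem \ref{th:equivalencePBD}; the minimum distance being $d$ follows again from the even-freeness/linear-dependence equivalence.

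I do not expect a genuine obstacle in this proposition, since all the real content lies in Theorem \ref{th:equivalencePBD} and in the combinatorial interpretation of linear dependence over $\mathbb{F}_2$ as an even configuration. The only point requiring modest care is the bookkeeping that (i) the column-weight-three assumption upgrades the PBD to a Steiner triple system, and (ii) the odd-replicate condition for a Steiner triple system reduces exactly to the stated integrality and parity condition on $\frac{-1+\sqrt{1+24n}}{4}$, so this number being an odd integer subsumes the requirement that $v$ itself be an integer admissible for a Steiner triple system.
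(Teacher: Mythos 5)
Your proposal is correct and follows exactly the route the paper takes: the paper presents this proposition as an immediate corollary of Theorem \ref{th:equivalencePBD} and Equation (\ref{eq1}), using the observation in the preceding paragraph that a set of columns summing to zero over $\mathbb{F}_2$ is the same as an even configuration, which is precisely your chain of equivalences. The bookkeeping you flag (column weight three upgrading the PBD to an $S(2,3,v)$, and the odd-replicate condition becoming the parity condition on $\frac{-1+\sqrt{1+24n}}{4}$) is the same translation the paper performs implicitly.
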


As far as the authors are aware, the sharpest known upper bound on the even-freeness of an $S(2,3,v)$ is the one found in the study of X-tolerant circuits:
\begin{theorem}[Fujiwara and Colbourn \cite{FC}]\label{no8evenfreeSTS}
For $v>3$ there exists no nontrivial $8$-even-free $S(2,3,v)$.
\end{theorem}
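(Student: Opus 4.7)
The plan is to assume for contradiction that there exists a nontrivial $8$-even-free $S(2,3,v)$ with $v > 3$, and to produce an even $i$-configuration with $i \in \{4,6,8\}$ directly inside it. Since every block of an STS contributes exactly three point-incidences, an even $i$-configuration has $3i$ incidences distributed in even multiplicities over its point support, forcing $i$ to be even. Thus $8$-even-freeness amounts to the absence of even configurations of sizes $4$, $6$, and $8$ only, and in particular the hypothetical STS must be anti-Pasch, since the Pasch is the unique even $4$-configuration up to isomorphism.

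For $v > 3$ the admissible orders satisfy $v \geq 7$, so the replication number is $r = (v-1)/2 \geq 3$. I would fix a point $x$ lying in three distinct blocks $B_1 = \{x, a_1, a_2\}$, $B_2 = \{x, b_1, b_2\}$, $B_3 = \{x, c_1, c_2\}$. Each of the twelve cross-pairs of the form $\{a_i,b_j\}$, $\{a_i,c_j\}$, or $\{b_i,c_j\}$ lies in a unique block whose third point, by anti-Paschness, can be shown to lie outside $\{x, a_1, a_2, b_1, b_2, c_1, c_2\}$. The twelve-tuple of these third points, together with its equality pattern, is a finite combinatorial invariant classifying the local structure at $x$.

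Next I would carry out a case analysis on this equality pattern. Anti-Paschness directly eliminates the patterns in which too many third points coincide; in every surviving case, combining a small subset of the cross-blocks with one or two of the $B_k$ (and possibly one further block obtained by invoking the Steiner property on a new pair of points) produces either an even $6$-configuration of prism type or an even $8$-configuration, matched against the catalogs of even STS configurations in \cite{TRIPLESYSTEMS} and \cite{stopsts}. Any such occurrence contradicts $8$-even-freeness, completing the argument.

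The main obstacle will be the length and uniformity of the case analysis. Even configurations of sizes $6$ and $8$ in a Steiner triple system come in multiple isomorphism types, and the number of raw equality patterns on twelve third points is large. The argument will rely on choosing a good normalization, for instance quotienting by the symmetries $a_1 \leftrightarrow a_2$ (and similarly for the $b$'s and $c$'s) and the $S_3$ action permuting $B_1, B_2, B_3$, before enumerating cases, so that only a short list of inequivalent patterns remains to be checked against the catalog.
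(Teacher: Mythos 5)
The first thing to note is that the paper does not prove this statement at all: it is quoted from Fujiwara and Colbourn \cite{FC}, so there is no in-paper proof to compare yours against. Your proposal therefore has to stand on its own, and as written it does not. The preliminary observations are fine --- the parity argument showing that an even configuration must have an even number of blocks is correct, and the Pasch is indeed the unique even $4$-configuration --- but the entire mathematical content of the theorem is the assertion that an \emph{anti-Pasch} $S(2,3,v)$ (which exists for every admissible $v\geq 15$) must still contain an even $6$- or $8$-configuration, and that is exactly the part you defer to an unexecuted case analysis, acknowledging yourself that its length and uniformity are "the main obstacle."

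Moreover, the one concrete structural claim you do make in setting up that analysis is false. You assert that anti-Paschness forces the third point of the block through each cross-pair, say $\{a_1,b_1\}$, to lie outside $\{x,a_1,a_2,b_1,b_2,c_1,c_2\}$. The Steiner property rules out $x$, $a_2$, and $b_2$, but nothing rules out $c_1$ or $c_2$: the four blocks $\{x,a_1,a_2\}$, $\{x,b_1,b_2\}$, $\{x,c_1,c_2\}$, $\{a_1,b_1,c_1\}$ contain $x$ three times and seven points in all, so they are not a Pasch, and anti-Paschness is silent here. (What anti-Paschness does forbid is, for example, having both $\{a_1,b_1,c_1\}$ and $\{a_2,b_2,c_1\}$ as blocks, since those two together with $B_1$ and $B_2$ form a Pasch.) So the normalization of your "equality pattern" starts from a wrong premise, the set of cases to be examined is larger and messier than you allow for, and the enumeration that would actually establish the theorem is never carried out. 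As it stands this is a plausible research plan, not a proof.
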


Hence, by Proposition \ref{evenisdistance}, Theorem \ref{no8evenfreeSTS}, and the fact that every $S(2,3,v)$ is $3$-even-free,
we obtain a bound on the minimum distance of the classical ingredient:
\begin{theorem}\label{th:paraSTSEAQECC}
If there exists a regular homogeneous quantum \textup{LDPC} code with girth six and column weight three requiring only one ebit,
then the minimum distance $d$ of the corresponding classical LDPC code satisfies $4 \leq d \leq 8$.
\end{theorem}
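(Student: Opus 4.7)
The plan is to chain together the equivalence between minimum distance and even-freeness (Proposition \ref{evenisdistance}) with the two known bounds on even-freeness of Steiner triple systems: the trivial lower bound and Theorem \ref{no8evenfreeSTS}. First I would recall that by Theorem \ref{th:equivalencePBD} together with regularity and column weight three, the parity-check matrix $H$ is an incidence matrix of some $S(2,3,v)$ with $v = \frac{1+\sqrt{1+24n}}{2}$, and by Proposition \ref{evenisdistance} the minimum distance $d$ equals the smallest $i$ for which this $S(2,3,v)$ contains an even $i$-configuration.

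For the lower bound, I would note that any single block or any two blocks of a Steiner triple system certainly fail to be even (a single block contributes three points each with odd multiplicity one, and two distinct blocks meet in at most one point by the index-one property, so at least four points occur exactly once). Moreover, since every block has odd size, the multiset of points appearing in an $i$-configuration has total multiplicity $3i$, which is odd when $i$ is odd; hence no odd $i$-configuration can be even, so in particular $i=3$ is ruled out. Combining these observations, the design is trivially $3$-even-free, so $d \geq 4$.

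For the upper bound, I would invoke Theorem \ref{no8evenfreeSTS} directly: since the code is nontrivial and its underlying $S(2,3,v)$ has $v > 3$, the design fails to be $8$-even-free, so it must contain an even $i$-configuration for some $i \leq 8$. By the characterization in Proposition \ref{evenisdistance}, this forces $d \leq 8$.

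The only step requiring care is verifying that Theorem \ref{no8evenfreeSTS} actually applies, namely that $v > 3$; but a regular homogeneous quantum LDPC code of any interest has length $n > 1$, which by the relation $n = \frac{v(v-1)}{6}$ forces $v \geq 7$, so the hypothesis is automatic. No calculation beyond this compatibility check is needed, since the two nontrivial ingredients (even-freeness equals minimum distance, and the upper bound on even-freeness) have already been established in the excerpt.
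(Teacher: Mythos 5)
Your proposal is correct and follows essentially the same route as the paper: the paper derives the bound $4 \leq d \leq 8$ directly from Proposition \ref{evenisdistance}, the trivial $3$-even-freeness of every $S(2,3,v)$, and Theorem \ref{no8evenfreeSTS}. Your additional verifications (why odd configurations cannot be even, and why $v>3$ holds so that Theorem \ref{no8evenfreeSTS} applies) are sound and merely make explicit what the paper leaves implicit.
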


\begin{comment}
If one wishes to utilize a classical LDPC code of higher minimum distance,
the corresponding $S(2,3,v)$ must have higher even-freeness.
The problem of the existence of $5$-even-free $S(2,3,v)$s is equivalent to a special case of the long standing conjecture called
$r$-sparse conjecture in combinatorial design theory.
This special case was solved by Grannell, Griggs, and Whitehead \cite{RESOLUTIONANTIPASCH}:
\begin{theorem}[Grannell, Griggs, and Whitehead \cite{RESOLUTIONANTIPASCH}]\label{anti-Pasch}
There exists a $5$-even-free $S(2,3,v)$ if and only if $v \equiv 1, 3$ (mod $6$) and $v \not\in \{ 7,13\}$.
\end{theorem}
Their proof is constructive but relies heavily on previously known construction methods to establish the theorem.
Useful explicit constructions for $5$-even-free $S(2,3,v)$s can be found in 
\cite{ANTIPASCHJLMS,ANTIPASCHSTINSONDM,ANTIPASCHGRANNELLARS,
ANTIPASCHBROUWER,RESOLUTIONANTIPASCH,TRIPLESYSTEMS}.
For moderate code lengths, the table of special designs called cyclic $5$-sparse $S(2,3,v)$s in \cite{MITRECOLBOURN}
provide explicit incidence matrices of $5$-even-free $S(2,3,v)$ for $v \leq 91$, that is, code length up to $1365$,
because cyclic $5$-sparse $S(2,3,v)$s are all $5$-even-free as well.
The cyclic group of order $v$ acts regularly on the point set of a cyclic $S(2,3,v)$ (see \cite{TRIPLESYSTEMS}),
and hence the corresponding classical LDPC code is quasi-cyclic as well.
\end{comment}

As is the case with $S(2,3,v)$s, in general, odd-replicate $(d-1)$-even-free $S(2,\mu,v)$s that are not $d$-even-free are equivalent to
classical regular LDPC codes of constant column weight $\mu$, girth six, and minimum distance $d$
that generate homogeneous quantum LDPC codes requiring only one ebit.
However, there do not seem to exist many results on the even-freeness of $S(2,\mu,v)$s
or equivalently the minimum distances of the corresponding classical regular LDPC codes in the literature.
To the best of the authors' knowledge,
the following explicit construction gives the highest known even-freeness for $\mu \geq 3$:

\begin{theorem}[M\"{u}ller and Jimbo \cite{JIMBOERASURE}]\label{affineerasure}
For any odd prime power $q$ and positive integer $m \geq 2$
the points and lines of affine geometry $AG(m,q)$ form a $(2q-1)$-even-free $S(2,q,q^m)$.
\end{theorem}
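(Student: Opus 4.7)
The plan is to handle the planar case $m=2$ directly via the projective completion, and reduce the general case to it by induction on $m$. First verify that $AG(m,q)$ really is an $S(2,q,q^m)$: it has $q^m$ points, every line has $q$ points, and two distinct points determine a unique line. In any even $i$-configuration $\mathcal{C}$ the double count $iq=\sum_p r_p$ is even, so since $q$ is odd, $i$ itself must be even, and it suffices to exclude even $i$-configurations with $2\leq i\leq 2q-2$.

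For $m=2$, work in the projective completion $PG(2,q)$, adjoining the line $\ell_\infty$ at infinity and extending each affine line $L$ to $\bar L=L\cup\{\infty_{\mathrm{dir}(L)}\}$. The key lemma, valid when $q$ is odd, is that any nonempty even configuration $\mathcal{D}$ of lines in $PG(2,q)$ has odd cardinality: fixing $M\in\mathcal{D}$ and using that any two distinct projective lines meet in exactly one point gives $\sum_{p\in M}d(p)=(q+1)+(|\mathcal{D}|-1)=q+|\mathcal{D}|$, whose evenness forces $|\mathcal{D}|$ to be odd. Now let $\mathcal{C}$ be an even configuration in $AG(2,q)$ of even size $i\leq 2q-2$ and split into two cases. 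If $\mathcal{C}$ contains some full parallel class $\Pi$ of $q$ lines, then at every point $r_p-1$ is odd and hence at least $1$, so $\mathcal{C}\setminus\Pi$ covers every point of $AG(2,q)$; comparing incidences gives $(|\mathcal{C}|-q)\cdot q\geq q^2$, so $|\mathcal{C}|\geq 2q$, a contradiction. Otherwise $k_\alpha<q$ for every parallel class $\alpha$; picking any line $M\notin\mathcal{C}$ of direction $\alpha$, the identity $\sum_{p\in M}r_p=i-k_\alpha$ (since every $L\in\mathcal{C}$ of direction other than $\alpha$ meets $M$ in exactly one point while those of direction $\alpha$ are parallel to $M$) combined with the evenness of the left side forces every $k_\alpha$ to be even. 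But then $\bar{\mathcal{C}}$ is an even configuration of $PG(2,q)$ of even size $i$, contradicting the key lemma.

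For $m\geq 3$, induct on $m$ with base $m=2$. Suppose $\mathcal{C}$ is an even $i$-configuration in $AG(m,q)$ with $i\leq 2q-2$, and let $V\subseteq\mathbb{F}_q^m$ be the linear span of the direction vectors of lines in $\mathcal{C}$. If $V\neq\mathbb{F}_q^m$, choose a hyperplane $W$ of $\mathbb{F}_q^m$ containing $V$ and partition $AG(m,q)$ into the $q$ parallel affine hyperplanes $H_0,\ldots,H_{q-1}$ translating $W$; every $L\in\mathcal{C}$ has its direction in $V\subseteq W$ and hence is contained in some $H_j$, so $\mathcal{C}$ partitions as $\bigsqcup_j\mathcal{C}_j$ with $\mathcal{C}_j\subseteq H_j\cong AG(m-1,q)$. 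Since no line of $\mathcal{C}_{j'}$ with $j'\neq j$ meets $H_j$, each point of $H_j$ retains its even degree within $\mathcal{C}_j$, so $\mathcal{C}_j$ is an even configuration of $AG(m-1,q)$. The induction hypothesis then forces any nonempty $\mathcal{C}_j$ to have $|\mathcal{C}_j|\geq 2q$, which contradicts $i\leq 2q-2$.

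The main obstacle is the remaining case $V=\mathbb{F}_q^m$, in which the directions of $\mathcal{C}$ span the whole ambient space and no hyperplane containing them exists. Here I would exploit the parity identity $\sum_{p\in\rho}d(p)=q\cdot a_\rho+i$ for each projective hyperplane $\rho\subset PG(m,q)$, where $a_\rho$ counts the lines of $\mathcal{C}$ contained in the affine part of $\rho$, together with the evenness of affine degrees, to constrain the way $\mathcal{C}$ distributes across the affine hyperplanes, and ultimately to force $\mathcal{C}$ to be concentrated inside a single affine plane; the planar case would then finish the proof.
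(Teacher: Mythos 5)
The paper itself does not prove this statement---it quotes it from M\"{u}ller and Jimbo---so there is no internal proof to compare against, and your proposal has to be judged on its own terms. The parts you actually prove are correct. The parity reduction to even $i$, the key lemma that a nonempty even configuration of lines of $PG(2,q)$ has odd cardinality when $q$ is odd, the incidence count $(|\mathcal{C}|-q)q\geq q^2$ ruling out a full parallel class, and the verification that every $k_\alpha$ is even so that $\bar{\mathcal{C}}$ becomes an even configuration of $PG(2,q)$ of even size all check out; so does the reduction for $m\geq 3$ when the directions of $\mathcal{C}$ lie in a proper subspace, since each line then sits inside a single coset of a hyperplane and the slices inherit evenness.

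The genuine gap is the spanning case $V=\mathbb{F}_q^m$, which you leave as a plan rather than a proof, and the plan as stated cannot succeed: if the directions of $\mathcal{C}$ span $\mathbb{F}_q^m$ with $m\geq 3$, then $\mathcal{C}$ cannot be ``concentrated inside a single affine plane,'' because a plane carries only a two-dimensional space of directions. The correct goal in this case is to show outright that no such even configuration of at most $2q-2$ lines exists, and no argument for that is given. Worse, both tools driving your planar proof break down here: the identity $\sum_{p\in M}r_p=i-k_\alpha$ depends on every non-parallel line of $\mathcal{C}$ meeting $M$, and the odd-cardinality lemma depends on any two lines of $PG(2,q)$ meeting; for $m\geq 3$ lines can be skew, so neither the evenness of the direction multiplicities nor the projective lemma is available. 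The hyperplane identity you propose only yields the congruence $a_\rho\equiv i\pmod 2$ for each affine hyperplane, which constrains how $\mathcal{C}$ distributes across parallel classes of hyperplanes but comes nowhere near forcing planarity. Excluding, for example, an even configuration of $2q-2$ lines in $AG(3,q)$ with spanning directions requires a substantively new argument, and that missing argument is precisely the content of the cited theorem.
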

When $m$ is odd, the $S(2,q,q^m)$ is odd-replicate, and the size of each parity-check is $\frac{q^m}{2}$.
Hence, we obtain quantum LDPC codes requiring only one ebit in this case.
For a more detailed treatment of explicit constructions and the performance of quantum LDPC codes obtained from finite geometry,
we refer the reader to \cite{FCVBT, HYH2}.

When an LDPC code is decoded by the standard sum-product algorithm,
$6$-cycles may affect error correction performance of an LDPC code in a negative manner.
It is known that the number $N_6$ of $6$-cycles in a parity-check matrix from an incidence matrix
of an $S(2,\mu,v)$ is exactly $\frac{v(v-1)(v-\mu)}{6}$ (see, for example, Johnson and Weller \cite{JW2}).
Hence, by Equation (\ref{eq1}), if we decode a regular homogeneous quantum regular LDPC code
which requires only one ebit and is of length $n$, girth six, and column weight $\mu$ in two separate steps
by using the same parity-check matrix for both X and Z errors, each step involves
\[N_6 = \frac{n\mu(\mu-1)(1-2\mu+\sqrt{1+4n\mu(\mu-1)})}{12}\]
6-cycles.

\section{Conclusion}\label{sc:Conclusion}
We have shown that homogeneous quantum LDPC codes requiring only one ebit and avoiding $4$-cycles
are equivalent to special classes of fundamental combinatorial designs.
Various properties of entanglement-assisted quantum LDPC codes
have been revealed by applying known theorems and techniques of combinatorial design theory.
Our results also give theoretical insight into the known entanglement-assisted quantum LDPC codes
presented as counterexamples to the conjecture on the required amount of entanglement.

We have demonstrated that combinatorial design theory may work as a useful mathematical tool
to investigate homogeneous quantum LDPC codes consuming one ebit.
It  will be interesting to study the case when two or more ebits are allowed and investigate how much information we can extract
about the structure of such quantum LDPC codes.

Another important direction would be to investigate quantum LDPC codes obtained from pairs of distinct classical LDPC codes.
As we have seen in Section \ref{sc:Rates}, if we only allow one ebit,
the rate of a homogeneous quantum LDPC code of girth six approaches one as the length becomes larger.
Hence, it would be quite interesting to investigate whether heterogeneous quantum LDPC codes can overcome this fundamental limitation
while avoiding short cycles and suppressing the number of required ebits.
Another possible merit of studying the heterogenous case from the viewpoint of combinatorics would be that
combinatorial methods appear to be helpful to design highly structured quantum LDPC codes.
One possible direction would be to study how to optimize codes for an asymmetrical quantum channel
where the probabilities of bit flips and phase flips are not equal (see \cite{IM,SKR,FH}).
We hope that these questions will be answered in future work.

\section*{Acknowledgments}
The authors would like to thank Mark M. Wilde, the anonymous referee and Associate Editor Jean-Pierre Tillich for their insightful comments and valuable suggestions.
This research was conducted when the first author was visiting the Department of Mathematical Sciences, Michigan Technological University.
He thanks the department for its hospitality.

% Generated by IEEEtranS.bst, version: 1.13 (2008/09/30)

\begin{IEEEbiographynophoto}{Yuichiro Fujiwara}
(M'10) received the B.S. and M.S. degrees in mathematics from Keio University, Japan,
and the Ph.D. degree in information science from Nagoya University, Japan.

He was a JSPS postdoctoral research fellow with the Graduate School of
System and Information Engineering, Tsukuba University, Japan, and a visiting
scholar with the Department of Mathematical Sciences, Michigan Technological University.
He is currently with the Division of Physics, Mathematics and Astronomy,
California Institute of Technology, Pasadena, where he works as a visiting postdoctoral research fellow.

Dr.\ Fujiwara's research interests include combinatorics and its interaction with computer science and quantum information science,
with particular emphasis on combinatorial design theory, algebraic coding theory, and quantum information theory.
\end{IEEEbiographynophoto}

\begin{IEEEbiographynophoto}{Vladimir D. Tonchev}
graduated with PhD in Mathematics from the University of Sofia,
Bulgaria, in 1980, and received the  Dr. of Mathematical Sciences
degree from the Bulgarian Academy of Sciences in 1987.
After spending a year as a research fellow at the Eindhoven University
of Technology, The Netherlands, (1987-88), and two years at the
universities of Munich, Heidelberg and Giessen in Germany as
an Alexander von Humboldt Research Fellow (1988-90), Dr. Tonchev joined
Michigan Technological University, where he is currently
a Professor of Mathematical Sciences.
Tonchev has published over 160 papers, four books,
three book chapters, and edited several volumes in the area of
error-correcting codes,
combinatorial  designs, and their applications.
Dr. Tonchev is a member of the editorial board of \textit{Designs, Codes and Cryptography},
\textit{Journal of Combinatorial Designs}, \textit{Applications and Applied Mathematics},
and \textit{Albanian Journal of Mathematics}, and a Founding Fellow of
the Institute of
Combinatorics and its Applications.
\end{IEEEbiographynophoto}

\end{document}